\def\BibTeX{{\rm B\kern-.05em{\sc i\kern-.025em b}\kern-.08em
    T\kern-.1667em\lower.7ex\hbox{E}\kern-.125emX}}
\newtheorem{definition}{Definition}
\newtheorem{theorem}{Theorem}
\newtheorem{corollary}{Corollary}
\theoremstyle{definition}
\newtheorem{remark}{Remark}
\title{Risk level dependent Minimax Quantile lower bounds for Interactive Statistical Decision Making}
\name{Raghav Bongole, Amirreza Zamani, Tobias J. Oechtering, and Mikael Skoglund
\thanks{This work is supported by the Knut and Wallenberg Foundation.}}
\address{Division of Information Science and Engineering (ISE)\\
KTH Royal Institute of Technology\\
\texttt{\{bongole, amizam, oech, skoglund\}@kth.se}}
\begin{document}
%
\maketitle
\begin{abstract}
Minimax risk and regret focus on expectation, missing rare failures critical in safety-critical bandits and reinforcement learning. Minimax quantiles capture these tails. Three strands of prior work motivate this study: minimax-quantile bounds restricted to non-interactive estimation; unified interactive analyses that focus on expected risk rather than risk level specific quantile bounds; and high-probability bandit bounds that still lack a quantile-specific toolkit for general interactive protocols. To close this gap, within the interactive statistical decision making framework, we develop high-probability Fano and Le Cam tools and derive risk level explicit minimax-quantile bounds, including a quantile-to-expectation conversion and a tight link between strict and lower minimax quantiles. Instantiating these results for the two-armed Gaussian bandit immediately recovers optimal-rate bounds.
\end{abstract}
\begin{keywords}
information theory, learning theory
\end{keywords}
\section{Introduction}
\label{sec:intro}

The concept of minimax risk has become a staple in learning theory and statistics \cite{ma2024high,audibert2009minimax,azar2017minimax}. In interactive or online learning settings, the minimax risk is often replaced by its counterpart, the minimax regret. Both capture what the best algorithm can achieve when faced with the worst-case environment. In interactive decision-making problems such as bandits and reinforcement learning (RL), there has been considerable work either designing algorithms that achieve near minimax-optimal performance or proving, mainly via information-theoretic principles, lower bounds on minimax criteria \cite{jin2018q, lattimore2020bandit, foster2021statistical}. Such lower bounds are converse results: no algorithm can perform better than the stated limit. The minimax risk widely used in these works focuses on expected loss, where the expectation is taken over all trajectories of the interaction. Although well-studied with many tools such as Le Cam’s and Fano’s methods \cite{lecam1973convergence,polyanskiy2014lecture,cover1999elements,yu1997assouad,chen2016bayes,duchi2013distance,chen2024assouad} and the DEC framework \cite{foster2021statistical}, this criterion can miss information crucial in safety-critical applications: it is agnostic to tail behavior. For instance, two algorithms can share the same expected loss yet have very different $95\%$ quantiles, leading to markedly different risk profiles at high confidence. To study tails in a minimax sense, minimax quantiles provide a natural tool. Ma et al. \cite{ma2024high} develop high-probability techniques for minimax quantile lower bounds in estimation (non-interactive) problems, adapting classical methods to yield risk level $\delta$-explicit tail guarantees. However, their analysis is limited to settings without interaction.
On the interactive front, Chen et al. \cite{chen2024assouad} introduce the interactive statistical decision making (ISDM) framework to unify passive estimation and interactive protocols, and to systematize information-theoretic lower bounds (e.g. Fano and Le Cam) in interactive environments. Their focus is primarily on minimax expected risk; a $\delta$-explicit, quantile-focused treatment within interactive decision making has remained comparatively underexplored. Prior bandit literature contains high-probability lower bounds (e.g., \cite{lattimore2020bandit, gerchinovitz2016refined}), but these results do not by themselves provide a unifying route to minimax quantile lower bounds in general interactive models. In this work, we explore techniques that lower bound the minimax quantile in interactive problems, thereby illuminating fundamental limits on tail behavior under interaction. Inspired by Ma et al. \cite{ma2024high} for non-interactive minimax quantiles and by Chen et al. \cite{chen2024assouad} for interactive information-theoretic lower bounds, we derive counterparts that operate directly in ISDM and make the dependence on the risk level $\delta$ explicit. We also develop user-friendly tools that show how the risk level affects tails. 

Specifically, we make the following contributions:
i. We introduce a risk level  $\delta$-explicit minimax–quantile lens for ISDM that unifies non-interactive (estimation) and interactive analyses, enabling direct comparison and transfer of high-probability lower bounds.
ii. We show that minimax and lower minimax quantiles coincide for essentially all confidence levels, so one can work with the lower minimax quantile and then lift guarantees to minimax quantiles.
iii. We give a quantile to expectation conversion in ISDM that turns any $\delta$-level high-probability lower bound into an in-expectation minimax lower bound.
iv. We develop unified tools: an  interactive Fano method and a high-probability interactive Le Cam method, yielding $\delta$-explicit lower minimax quantiles generalizing metric packings.
v. We demonstrate ready applicability: a direct instantiation for the two-armed Gaussian bandit yields 
$\delta$-explicit bounds recovering the $\sqrt{T\log(1/\delta)}$ scaling.

\section{Framework}
We use the framework from \cite{chen2024assouad} as it includes statistical estimation problems as well as interactive decision making problems. Suppose that the space of all outcomes is denoted by $\mathcal{X}$ and $\mathcal{M}$ comprises the set of all models. The interaction process proceeds as follows: A decision maker chooses a decision/algorithm $\textup{ALG}$ from a set $\mathcal{D}$, while the environment chooses a model $M \in \mathcal{M}$. Based on $M$ and $\textup{ALG}$, there is an induced probability distribution $\mathbb{P}^{M,\textup{ALG}}$. An outcome $X$ is generated from the induced distribution: $X \sim \mathbb{P}^{M,\textup{ALG}}$. The goal of the decision maker is to minimize the risk specified by non-negative function $L(M,X)$ by choosing an appropriate $\textup{ALG}$. Thus, an ISDM problem comprises a 4-tuple $(\mathcal{X}, \mathcal{M}, \mathcal{D}, \mathcal{L})$.

A natural criterion is to evaluate an algorithm in the worst case over models, formalized via the minimax risk.


\begin{definition}[Minimax Risk]
The minimax risk \(\mathfrak{M}\) captures the least achievable risk against the most challenging model in \(\mathcal{M}\):
$
\mathfrak{M}
\;:=\;
\inf_{\mathrm{ALG}\in\mathcal{D}}\;
\sup_{M\in\mathcal{M}}\;
\mathbb{E}^{M,\mathrm{ALG}}\!\bigl[L(M,X)\bigr].
$
\end{definition}

Motivated by the observation that expectation–based minimax risk \(\mathfrak{M}\) may mask rare, large losses \cite{ma2024high}, we quantify tail performance via the \((1-\delta)\)–quantile of the loss, which we define below.

\begin{definition}[Quantile]
For a given level \(\delta\in(0,1]\), the \((1-\delta)\)-th quantile is defined as
\begin{align*} \textup{Quantile}&(1-\delta, \mathbb{P}^{M,\textup{ALG}}, L) = \\ &\inf \left\{r \in [0, \infty]: \mathbb{P}^{M,\textup{ALG}}(L(M,X)>r) \leq \delta\right\}. \end{align*}
\end{definition}

The minimax quantile is the smallest \((1-\delta)\)–quantile that any algorithm can guarantee uniformly over the model class.

\begin{definition}[Minimax Quantile]
For a given level \(\delta\in(0,1]\), the (strict) minimax quantile is
\begin{align*}
\mathfrak{M}(\delta)
\;:=\;
\inf_{\mathrm{ALG}\in\mathcal{D}}\;
\sup_{M\in\mathcal{M}}\;
\textup{Quantile}&(1-\delta, \mathbb{P}^{M,\textup{ALG}}, L) .
\end{align*}
\end{definition}

Following~\cite{ma2024high}, we introduce the lower minimax quantile \(\mathfrak{M}_{-}(\delta)\), a relaxation of the minimax quantile that is more amenable to analysis: it admits \(\delta\)-explicit lower bounds via high-probability analogues of Le Cam’s and Fano’s methods under simple separation conditions.

\begin{definition}[Lower Minimax Quantile]
For a given level \(\delta\in(0,1]\), the lower minimax quantile is given by $\mathfrak{M}\_(\delta) =$
\begin{align*} \inf \bigg\{r \in [0, \infty]: \inf_{\textup{ALG} \in \mathcal{D}} \sup_{M \in \mathcal{M}} \mathbb{P}^{M,\textup{ALG}}(L(M,X)> r) \leq \delta \bigg\}. \end{align*}
\end{definition}

\newcommand{\ALG}{\mathrm{ALG}}
\newcommand{\TV}{\operatorname{TV}}
\newcommand{\KL}{\operatorname{KL}}

\section{Main Results}

\noindent\textbf{Conventions.}
All probabilities/expectations are under the interactive law $\mathbb{P}^{M,\ALG}$ and $\mathbb{E}^{M,\ALG}$. We abbreviate total variation distance, Kullback–Leibler divergence and mutual information by $\TV$, $\KL$ and MI.

\subsection{Bridges between tails, quantiles, and expectation.}
In the following result, we link minimax quantiles to the minimax risk in ISDM.

\begin{theorem}[Quantile-to-expectation in ISDM]
Let $\mathfrak{M}(\delta)$ be the minimax quantile and let $\mathfrak{M}$ be the minimax risk; then for every $\delta\in(0,1]$, we have
\[
\mathfrak{M}
=\inf_{\ALG\in\mathcal{D}}\ \sup_{M\in\mathcal{M}}\
\mathbb{E}^{M,\ALG}\!\bigl[L(M,X)\bigr]\ \ge\ \delta\, \mathfrak{M}(\delta).
\]
This shows that a $(1-\delta)$–quantile lower bound immediately implies a $\delta$–scaled expectation lower bound.
\end{theorem}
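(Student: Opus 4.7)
The plan is to prove the inequality pointwise for each fixed pair $(M, \ALG)$ via a one-line tail-to-expectation bound, and then carry the inequality through the outer $\sup_M$ and $\inf_\ALG$ operations. First, I would fix arbitrary $\ALG \in \mathcal{D}$ and $M \in \mathcal{M}$, write $Y := L(M,X) \geq 0$ under $\mathbb{P}^{M,\ALG}$, and set $q := \textup{Quantile}(1-\delta, \mathbb{P}^{M,\ALG}, L)$. For every $r < q$, the definition of $q$ as an infimum forces $\mathbb{P}^{M,\ALG}(Y > r) > \delta$. Combining this with the elementary layer-cake bound $\mathbb{E}^{M,\ALG}[Y] \geq r \cdot \mathbb{P}^{M,\ALG}(Y > r)$ (valid since $Y \geq 0$) yields $\mathbb{E}^{M,\ALG}[Y] \geq \delta r$ for every $r < q$. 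Letting $r \uparrow q$ (and reading $q = \infty$ as forcing $\mathbb{E}^{M,\ALG}[Y] = \infty$) gives the pointwise inequality
\[
\mathbb{E}^{M,\ALG}\!\bigl[L(M,X)\bigr] \;\geq\; \delta \cdot \textup{Quantile}(1-\delta, \mathbb{P}^{M,\ALG}, L).
\]

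Second, since $\delta$ is a nonnegative constant independent of $M$, I would pull it out of the supremum over models to obtain
\[
\sup_{M \in \mathcal{M}} \mathbb{E}^{M,\ALG}\!\bigl[L(M,X)\bigr] \;\geq\; \delta \sup_{M \in \mathcal{M}} \textup{Quantile}(1-\delta, \mathbb{P}^{M,\ALG}, L).
\]
Taking $\inf_{\ALG \in \mathcal{D}}$ of both sides preserves the inequality (monotonicity of $\inf$) and produces $\mathfrak{M}$ on the left and $\delta\,\mathfrak{M}(\delta)$ on the right, which is precisely the claim.

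The only subtlety lies in the first step: the quantile is defined via the strict tail event $\{Y > r\}$, and the layer-cake bound uses the very same strict event, so the two align cleanly once one approaches $q$ from below. The edge cases (namely $q = 0$, which trivializes the bound, and $q = \infty$, which forces the expectation to be infinite by the same monotone argument) are absorbed uniformly by the limit as $r \uparrow q$. I therefore do not expect any genuine obstacle; the argument is essentially a careful Markov inequality wrapped inside the $\inf\sup$ structure of the ISDM framework.
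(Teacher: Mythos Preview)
Your proposal is correct and follows essentially the same route as the paper: apply the Markov-type bound $\mathbb{E}[L]\ge r\,\mathbb{P}(L>r)$ pointwise for each fixed $(M,\ALG)$, then pass $\sup_M$ and $\inf_{\ALG}$ through. If anything, your version is slightly more careful than the paper's one-line sketch, since you approach the quantile from below ($r\uparrow q$) rather than plugging in $r=q$ directly; this neatly sidesteps the fact that $\mathbb{P}(L>q)$ may be strictly less than $\delta$ at the quantile itself.
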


\begin{proof}[Proof]
The ISDM proof extends verbatim from the estimation setting \cite[Prop.~2]{ma2024high}: apply $\mathbb{E}[L]\ge r\mathbb{P}(L>r),$ for $r=\textup{Quantile}(1-\delta, \mathbb{P}^{M,\textup{ALG}}, L)$ for each fixed $(M,\ALG)$ and pass $\sup_M$ and $\inf_{\ALG}$ through.
\end{proof}

We next relate the minimax and lower minimax quantiles in ISDM and provide a tail-to-quantile conversion.

\begin{theorem}[Lower minimax quantile relation in ISDM]
\label{thm:lower-minimax-quantile-relation}
In the ISDM setting (with all probabilities taken under $\mathbb{P}^{M,\ALG}$), for every $\delta\in(0,1]$ and every $\xi\in(0,\delta)$, we have
\[
\mathfrak{M}_{-}(\delta)\ \le\ \mathfrak{M}(\delta)\ \le\ \mathfrak{M}_{-}(\delta-\xi).
\]
Consequently, $\mathfrak{M}(\delta)=\mathfrak{M}_{-}(\delta)$ for all $\delta\in(0,1]$ except a countable set. Moreover (tail-to-quantile conversion): if for some $r\ge 0$,
$
\inf_{\ALG}\ \sup_M\ \mathbb{P}^{M,\ALG}\!\big(L(M,X)>r\big)\;>\;\delta,$ then
\[
\quad
\mathfrak{M}(\delta)\ \geq\ \mathfrak{M}_{-}(\delta)\ \geq r.
\]
Thus, lower and strict minimax quantiles essentially coincide, and any minimax tail lower bound yields a minimax quantile lower bound.
\end{theorem}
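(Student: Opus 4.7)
The statement bundles a sandwich inequality, a countable-coincidence corollary, and a tail-to-quantile conversion; all three unpack directly from the defining infima. My plan is to prove the two halves of the sandwich first and then harvest the remaining two claims as corollaries.

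\textbf{Left inequality $\mathfrak{M}_{-}(\delta)\le\mathfrak{M}(\delta)$.} I would fix any $\ALG\in\mathcal{D}$ and set $Q(\ALG):=\sup_{M}\textup{Quantile}(1-\delta,\mathbb{P}^{M,\ALG},L)$. For every $\epsilon>0$ and every $M$, the defining infimum of the per-model quantile gives $\mathbb{P}^{M,\ALG}(L>Q(\ALG)+\epsilon)\le\delta$, so taking $\sup_M$ and then $\epsilon\downarrow 0$ shows $Q(\ALG)$ is feasible in the infimum defining $\mathfrak{M}_{-}(\delta)$. Hence $\mathfrak{M}_{-}(\delta)\le Q(\ALG)$, and taking $\inf_{\ALG}$ closes the inequality.

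\textbf{Right inequality $\mathfrak{M}(\delta)\le\mathfrak{M}_{-}(\delta-\xi)$.} Here the buffer $\xi$ is essential. Writing $r^{\star}:=\mathfrak{M}_{-}(\delta-\xi)$, the upward-closedness in $r$ of the feasibility set gives $\inf_{\ALG}\sup_M\mathbb{P}^{M,\ALG}(L>r^{\star}+\epsilon)\le\delta-\xi$ for every $\epsilon>0$. For any $\eta\in(0,\xi)$, near-optimality yields an algorithm $\ALG_{\epsilon,\eta}$ with $\sup_M\mathbb{P}^{M,\ALG_{\epsilon,\eta}}(L>r^{\star}+\epsilon)\le\delta-\xi+\eta<\delta$. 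Thus $r^{\star}+\epsilon$ lies in the feasibility set of every per-model $(1-\delta)$-quantile, so $\sup_M\textup{Quantile}(1-\delta,\mathbb{P}^{M,\ALG_{\epsilon,\eta}},L)\le r^{\star}+\epsilon$. Taking $\inf_{\ALG}$ and then $\epsilon\downarrow 0$ finishes the claim.

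\textbf{Consequences.} Since $\delta\mapsto\mathfrak{M}_{-}(\delta)$ is monotone non-increasing, it has at most countably many discontinuities; at every continuity point $\lim_{\xi\downarrow 0}\mathfrak{M}_{-}(\delta-\xi)=\mathfrak{M}_{-}(\delta)$, so the sandwich collapses to $\mathfrak{M}(\delta)=\mathfrak{M}_{-}(\delta)$ outside a countable exceptional set. The tail-to-quantile conversion is then immediate: if $\inf_{\ALG}\sup_M\mathbb{P}^{M,\ALG}(L>r)>\delta$, then $r$ is excluded from the upward-closed feasibility set defining $\mathfrak{M}_{-}(\delta)$, so $r<\mathfrak{M}_{-}(\delta)$, and the left inequality gives $r\le\mathfrak{M}_{-}(\delta)\le\mathfrak{M}(\delta)$.

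\textbf{Main obstacle.} The only real care is in bookkeeping strict versus non-strict inequalities across three nested infima (the quantile, $\mathfrak{M}_{-}$, and a near-optimal algorithm): this is exactly what forces the slack $\xi$ in the right inequality while leaving the left inequality slack-free, and what restricts the coincidence $\mathfrak{M}=\mathfrak{M}_{-}$ to a cocountable set of $\delta$. Everything else is a careful unfolding of definitions.
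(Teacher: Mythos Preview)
Your proposal is correct and follows essentially the same route the paper indicates (quantile--tail unpacking for the left inequality, the $\xi$-slack plus near-optimal algorithm for the right inequality, monotonicity of $\mathfrak{M}_{-}$ for the countable-exception claim, and non-increase of $r\mapsto\inf_{\ALG}\sup_M\mathbb{P}(L>r)$ for the tail-to-quantile part). Two harmless wording slips: in the left inequality you show $Q(\ALG)+\epsilon$ is feasible (not $Q(\ALG)$ itself), which already yields $\mathfrak{M}_{-}(\delta)\le Q(\ALG)$; and in the conversion you only get $r\le\mathfrak{M}_{-}(\delta)$, not strict, but that is exactly what the theorem claims and what you write in the final chain.
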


\begin{proof}
We extend the relation to the ISDM setting. The proof proceeds as in \cite[Th.~4]{ma2024high} with all probabilities taken under $\mathbb{P}^{M,\ALG}$: the inequalities $\mathfrak{M}_{-}(\delta)\le \mathfrak{M}(\delta)\le \mathfrak{M}_{-}(\delta-\xi)$ follow from the same quantile–tail and $\xi$–slack steps, and “equality a.e.” follows by the monotonicity of $\mathfrak{M}_{-}$; none of these arguments use non-interactivity, so each step extends verbatim. The tail-to-quantile claim follows since $r\mapsto \inf_{\ALG}\sup_M \mathbb{P}^{M,\ALG}(L>r)$ is non-increasing.
\end{proof}

\subsection{Tools based on the Fano method.}
We next derive a high-probability interactive Fano method that yields lower bounds on the minimax quantile under an f-divergence condition.

\begin{theorem}[High-probability interactive Fano $\Rightarrow$ lower minimax quantile]
\label{High-probability interactive Fano}
Fix an $f$-divergence $D_f$, a prior $\mu\in\Delta(\mathcal{M})$, and a risk level $\Delta>0$. For any $Q\in\Delta(\mathcal{X})$, set
\[
\begin{aligned}
&\bar\rho_{\Delta,Q}
:= \mathbb{P}_{M\sim\mu,\,X\sim Q}\!\bigl(L(M,X)\le \Delta\bigr),\\[-0.25ex]
&d_{f,\epsilon}(p)
:=
\begin{cases}
D_f\!\bigl(\mathrm{Bern}(1-\epsilon),\,\mathrm{Bern}(p)\bigr), & p\le 1-\epsilon,\\
0, & p>1-\epsilon,
\end{cases}
\\[-0.25ex]
&
\epsilon^\star
:=\!\!\!\!\!\sup_{\substack{Q\in\Delta(\mathcal{X})\\ \epsilon\in[0,1]}}\!\!
\Bigl\{\!\epsilon\!:\!\!
\sup_{\ALG}\ \mathbb{E}_{M\sim\mu}\!\bigl[D_f(\mathbb{P}_{M,\ALG}\|Q)\bigr]\!
<\! d_{f,\epsilon}\!\bigl(\bar\rho_{\Delta,Q}\bigr)\!\Bigr\}.
\end{aligned}
\]
Then, for all $\delta\in[0,\epsilon^\star)$, we have $\ \mathfrak{M}_{-}(\delta)\ \ge\ \Delta$.
\end{theorem}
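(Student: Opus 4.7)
The plan is to proceed by contradiction. Suppose $\mathfrak{M}_{-}(\delta)<\Delta$; the definition of the lower minimax quantile supplies some $r<\Delta$ with $\inf_{\ALG}\sup_M \mathbb{P}^{M,\ALG}(L(M,X)>r)\le\delta$, so for every slack $\eta>0$ one can pick an $\ALG\in\mathcal{D}$ with $\sup_M\mathbb{P}^{M,\ALG}(L(M,X)>r)\le\delta+\eta$. Since $\{L\le r\}\subseteq\{L\le\Delta\}$, this yields $\mathbb{P}^{M,\ALG}(L(M,X)\le\Delta)\ge 1-\delta-\eta$ for every $M$, and averaging over $M\sim\mu$ gives $\bar p:=\mathbb{E}_{M\sim\mu}\mathbb{P}^{M,\ALG}(L(M,X)\le\Delta)\ge 1-\delta-\eta$.

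Next, I would collapse the model-averaged $f$-divergence into a joint one via the standard identity $\mathbb{E}_{M\sim\mu}[D_f(\mathbb{P}^{M,\ALG}\|Q)]=D_f(P_{M,X}\|\mu\otimes Q)$ (valid by Fubini since the reference is a product measure), and apply the data-processing inequality to the $\{0,1\}$-valued test $\phi(M,X):=\mathbf{1}\{L(M,X)\le\Delta\}$, whose pushforwards under $P_{M,X}$ and $\mu\otimes Q$ are $\mathrm{Bern}(\bar p)$ and $\mathrm{Bern}(\bar\rho_{\Delta,Q})$ respectively. This yields $\mathbb{E}_{M\sim\mu}[D_f(\mathbb{P}^{M,\ALG}\|Q)]\ge D_f(\mathrm{Bern}(\bar p)\|\mathrm{Bern}(\bar\rho_{\Delta,Q}))$.

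Since $\delta<\epsilon^\star$, I then pick a pair $(Q,\epsilon)$ realising $\sup_{\ALG}\mathbb{E}_\mu[D_f(\mathbb{P}^{M,\ALG}\|Q)]<d_{f,\epsilon}(\bar\rho_{\Delta,Q})$ with $\delta<\epsilon\le\epsilon^\star$, and choose $\eta<\epsilon-\delta$ so that $\bar p\ge 1-\epsilon$; one may assume $\bar\rho_{\Delta,Q}<1-\epsilon$ (otherwise $d_{f,\epsilon}(\bar\rho_{\Delta,Q})=0$, forcing the impossible strict bound $\sup_{\ALG}\mathbb{E}_\mu[D_f]<0$). Convexity of $f$ makes $p\mapsto D_f(\mathrm{Bern}(p)\|\mathrm{Bern}(\bar\rho_{\Delta,Q}))$ non-decreasing on $[\bar\rho_{\Delta,Q},1]$, so $D_f(\mathrm{Bern}(\bar p)\|\mathrm{Bern}(\bar\rho_{\Delta,Q}))\ge d_{f,\epsilon}(\bar\rho_{\Delta,Q})$, contradicting the choice of $(Q,\epsilon)$ and hence $\mathfrak{M}_{-}(\delta)\ge\Delta$. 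The delicate step is the ordering of slacks: $(Q,\epsilon)$ must be fixed before $\eta$ so that the monotone branch of the Bernoulli $f$-divergence is actually reached, and the strict inequality in the definition of $\epsilon^\star$ is essential for the final contradiction.
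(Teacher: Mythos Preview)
Your argument is correct and is essentially the same as the paper's: both rest on the data-processing inequality applied to the indicator $\mathbf{1}\{L(M,X)\le\Delta\}$ for the joint law $P_{M,X}$ versus the product reference $\mu\otimes Q$, together with the monotonicity of $p\mapsto D_f(\mathrm{Bern}(p)\|\mathrm{Bern}(\bar\rho_{\Delta,Q}))$ on $[\bar\rho_{\Delta,Q},1]$. The only differences are cosmetic: the paper cites this step as the interactive Fano inequality of \cite[Thm.~2]{chen2024assouad} and then invokes the tail-to-quantile conversion of Theorem~\ref{thm:lower-minimax-quantile-relation} in the forward direction, whereas you unpack the same inequality inline and wrap it in a contradiction.
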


\begin{proof}[Proof sketch]
Apply the interactive Fano method \cite[Thm.~2]{chen2024assouad} with the non-strict
success event $\mathbf{1}\{L\le \Delta\}$ in place of $\mathbf{1}\{L<\Delta\}$.
With this change, proceeding as in the proof of \cite[Thm.~2]{chen2024assouad}, for every $\ALG$,
$
\mathbb{P}_{M\sim\mu,\,X\sim \mathbb{P}_{M,\ALG}}\!\bigl(L(M,X)>\Delta\bigr)\ \ge\ \epsilon
\quad\text{whenever}\quad
\mathbb{E}_{M\sim\mu} D_f(\mathbb{P}_{M,\ALG}\|Q) < d_{f,\epsilon}(\bar\rho_{\Delta,Q}).
$
By the definition of $\epsilon^\star$, the same $(Q,\epsilon)$ works uniformly over all algorithms, hence $\inf_{\ALG}\sup_M \mathbb{P}(L(M,X)>\Delta)\ge \epsilon$ for every $\epsilon<\epsilon^\star$. Using Theorem \ref{thm:lower-minimax-quantile-relation} (tail-to-quantile conversion) completes the proof.
\end{proof}

\begin{remark}
We emphasize that Theorem~\ref{High-probability interactive Fano} strengthens the interactive Fano guarantee \cite[Thm.~2]{chen2024assouad} in two ways. 
First, it is uniform in the algorithm: the threshold $\epsilon^\star$ is defined with a $\sup_{\ALG}$ inside the divergence condition, so the lower bound $\mathfrak{M}_{-}(\delta)\ge \Delta$ holds simultaneously for every $\delta<\epsilon^\star$, yielding an entire curve in $\delta$. 
Second, it aligns the event with the target quantile by using the non-strict success set $\{L\le \Delta\}$ giving a bound on the strict-tail $\mathbb{P}(L>\Delta)$, a stronger result. 
In contrast, \cite[Thm.~2]{chen2024assouad} provides a per–algorithm bound written via $\rho_{\Delta,Q}=\mathbb{P}(L<\Delta)$ giving a bound on the weak-tail $\mathbb{P}(L\geq\Delta)$ and typically requires re-optimizing the reference $Q$ and level for each desired $\delta$, without a uniform-in-$\ALG$ threshold.
\end{remark}

The following corollary is a specialization of Theorem~\ref{High-probability interactive Fano}, yielding a mutual information (MI) based criterion.
\begin{corollary}[MI criterion $\Rightarrow$ lower minimax quantile]
\label{cor:info-criterion}
Let $\mu\in\Delta(\mathcal{M})$ and $\Delta>0$. Define
\[
p_{\max}\ :=\ \sup_{x\in\mathcal{X}}\ \mu\!\bigl(\{M:\ L(M,x)\le \Delta\}\bigr)\ <\ 1.
\]
For each algorithm $\ALG$, let $I_{\mu,\ALG}(M;X)$ denote the mutual information under $M\sim\mu$ and $X\sim \mathbb{P}^{M,\ALG}$. If there exists $\epsilon\in(0,1]$ such that, for all $\ALG$,
\[
1+\frac{I_{\mu,\ALG}(M;X)+\log 2}{\log p_{\max}}\ \ge\ \epsilon,
\]
then for all $\delta\in(0,\epsilon)$ we have $\ \mathfrak{M}_{-}(\delta)\ \ge\ \Delta$.
\end{corollary}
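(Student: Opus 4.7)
The plan is to derive the corollary by instantiating Theorem~\ref{High-probability interactive Fano} with $D_f=\KL$, combined with the tail-to-quantile conversion in Theorem~\ref{thm:lower-minimax-quantile-relation}. The observation driving this reduction is that choosing the reference $Q$ to be the $X$-marginal induced by each algorithm turns the averaged KL into the mutual information, and the universal inequality $\bar\rho_{\Delta,Q}\le p_{\max}$ converts the abstract $f$-divergence criterion into the concrete MI inequality of the corollary.

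First I would fix an arbitrary $\ALG$, take $Q_{\ALG}$ to be the $X$-marginal under $M\sim\mu$, $X\sim\mathbb{P}^{M,\ALG}$, and apply the golden formula to obtain $\mathbb{E}_{M\sim\mu}[\KL(\mathbb{P}^{M,\ALG}\|Q_{\ALG})]=I_{\mu,\ALG}(M;X)$. Writing $\bar\rho_{\Delta,Q_{\ALG}}=\mathbb{E}_{X\sim Q_{\ALG}}[\mu(\{M:L(M,X)\le\Delta\})]$ then gives $\bar\rho_{\Delta,Q_{\ALG}}\le p_{\max}$ directly from the definition of $p_{\max}$. Second, I would invoke the elementary Bernoulli KL inequality $\KL(\mathrm{Bern}(1-\epsilon)\|\mathrm{Bern}(p))\ge(1-\epsilon)\log(1/p)-\log 2$, valid whenever $p\le 1-\epsilon$, and combine it with the decreasing monotonicity of $\log(1/\cdot)$ to obtain
\[
d_{\KL,\epsilon}(\bar\rho_{\Delta,Q_{\ALG}})\ \ge\ (1-\epsilon)\log(1/p_{\max})-\log 2.
\]
The side condition $\bar\rho_{\Delta,Q_{\ALG}}\le 1-\epsilon$ is a routine check: the corollary's hypothesis with $I\ge 0$ forces $\epsilon\le 1-\log 2/\log(1/p_{\max})$, and the elementary calculus inequality $p\log(1/p)\le 1/e<\log 2$ on $(0,1)$ yields $1-\epsilon\ge p_{\max}$.

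Third, the hypothesis rearranges to $I_{\mu,\ALG}(M;X)\le(1-\epsilon)\log(1/p_{\max})-\log 2$, and for any $\epsilon'<\epsilon$ this strictens to
\[
\mathbb{E}_M[\KL(\mathbb{P}^{M,\ALG}\|Q_{\ALG})]<d_{\KL,\epsilon'}(\bar\rho_{\Delta,Q_{\ALG}}).
\]
Invoking the per-algorithm intermediate statement from the proof sketch of Theorem~\ref{High-probability interactive Fano}, namely that for each fixed $\ALG$ a strict $f$-divergence inequality at level $\epsilon'$ implies $\mathbb{P}_{M\sim\mu,\,X\sim\mathbb{P}^{M,\ALG}}(L(M,X)>\Delta)\ge\epsilon'$, this yields the tail bound for every $\ALG$. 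Lower bounding $\sup_M$ by $\mathbb{E}_{M\sim\mu}$, taking $\inf_{\ALG}$, and sending $\epsilon'\uparrow\epsilon$ gives $\inf_{\ALG}\sup_M\mathbb{P}^{M,\ALG}(L(M,X)>\Delta)\ge\epsilon>\delta$ for any $\delta\in(0,\epsilon)$, after which the tail-to-quantile conversion in Theorem~\ref{thm:lower-minimax-quantile-relation} concludes $\mathfrak{M}_{-}(\delta)\ge\Delta$.

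The main subtlety to navigate is that $\epsilon^\star$ in Theorem~\ref{High-probability interactive Fano} is defined with a single reference $Q$ fixed before the $\sup_{\ALG}$, whereas the natural choice $Q_{\ALG}$ that collapses $\mathbb{E}_M[\KL]$ to the mutual information depends on the algorithm. The fix is to bypass $\epsilon^\star$ and use the per-algorithm intermediate step of the theorem's proof, which remains valid with an $\ALG$-dependent reference, since the final step only requires an $\inf_{\ALG}$ of the resulting tail lower bounds rather than a uniform-in-$\ALG$ witness.
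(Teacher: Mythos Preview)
Your proposal is correct and follows essentially the same route as the paper's proof sketch: instantiate the interactive Fano criterion with $D_f=\KL$ and the mixture reference $Q_{\ALG}=\mathbb{E}_{M\sim\mu}\mathbb{P}^{M,\ALG}$, bound $\bar\rho_{\Delta,Q_{\ALG}}\le p_{\max}$, and pass to a slightly smaller $\epsilon'<\epsilon$ to obtain the strict divergence inequality before applying tail-to-quantile. Your treatment is in fact more careful than the paper's sketch in two respects---you explicitly verify the side condition $\bar\rho_{\Delta,Q_{\ALG}}\le 1-\epsilon$ via $p\log(1/p)\le 1/e<\log 2$, and you correctly flag (and resolve) the subtlety that the $\ALG$-dependent reference $Q_{\ALG}$ forces one to use the per-algorithm intermediate step of Theorem~\ref{High-probability interactive Fano} rather than its $\epsilon^\star$ formulation.
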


\begin{proof}[Proof sketch]
Apply Theorem~\ref{High-probability interactive Fano} with $f=\mathrm{KL}$ and $Q=\mathbb{E}_{M \sim \mu}\,\mathbb{P}^{M,\ALG}$; use $\bar\rho_{\Delta,Q}\le p_{\max}$ and verify the f-divergence condition by choosing $\epsilon' < \epsilon$ so that, for all $\ALG$, we have, $I_{\mu,\ALG}(M;X)+\log 2 \le (1-\epsilon)\log\!\bigl(1/p_{\max}\bigr)\le \ (1-\epsilon')\log\!\bigl(1/p_{\max}\bigr) < \KL(1-\epsilon'||\bar\rho_{\Delta,Q}) $, as in \cite[Prop.~3]{chen2024assouad}.
\end{proof}

\begin{remark}
Ma et al.’s high-probability Fano lemma \cite[Lem.~7]{ma2024high} is stated for estimation via a finite packing and a separation parameter $\eta$ linked through a problem-specific function $g$, producing bounds of the form $\mathfrak{M}_{-}(\delta)\ge g(\eta)$ for $\delta\in(0,\epsilon)$. 
Corollary~\ref{cor:info-criterion} reaches an analogous high-probability conclusion without constructing a packing or enforcing pairwise separation: it generalizes the metric/separation premise to the condition
$p_{\max}:=\sup_{x\in\mathcal{X}}\mu\!\bigl(\{M:\,L(M,x)\le \Delta\}\bigr)<1$ (cf.~\cite[Prop.~3]{chen2024assouad}),
while remaining uniform over interactive algorithms and valid for arbitrary risk levels $\Delta$.
Also, Corollary~\ref{cor:info-criterion} holds even in the \emph{interactive} setting, whereas \cite[Lem.~7]{ma2024high} is formulated for \emph{estimation} (non-interactive) problems.
\end{remark}

\subsection{Tools based on Le Cam's method.}
We now present high-probability Le Cam tools under interaction: closeness in TV distance or in KL divergence, with a uniform separation yields a minimax quantile lower bound.

\begin{theorem}[High-probability Le Cam under interaction (TV and KL)]
\label{thm:hp-lecam}
Let $\delta \in (0,\tfrac{1}{2})$ and fix $M_1,M_2 \in \Theta$. 
Assume the uniform separation $L(M_1,x)+L(M_2,x)\ge 2\Delta$ for all transcripts $x$. Then:
\begin{itemize}
\item[(a)] \emph{TV condition:} If $\TV(\mathbb{P}^{M_1,\ALG},\mathbb{P}^{M_2,\ALG}) < 1 - 2\delta$, then $\ \mathfrak{M}_{-}(\delta) \ge \Delta$.
\item[(b)] \emph{KL condition:} If $\KL(\mathbb{P}^{M_1,\ALG}\|\mathbb{P}^{M_2,\ALG})\!<\! \log\!\tfrac{1}{4\delta(1-\delta)}$ then $\ \mathfrak{M}_{-}(\delta) \ge \Delta$.
\end{itemize}
\end{theorem}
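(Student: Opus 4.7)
The plan is to adapt the classical two-point Le Cam argument to the interactive setting and then invoke the tail-to-quantile conversion from Theorem~\ref{thm:lower-minimax-quantile-relation}. First I would restrict $\sup_{M \in \mathcal{M}}$ to the two-point submodel $\{M_1, M_2\}$ and, for an arbitrary $r \in [0,\Delta)$, consider the tail events $B_i := \{x : L(M_i,x) > r\}$. The separation $L(M_1,x)+L(M_2,x) \ge 2\Delta$ forces the inclusion $B_1^c \subseteq B_2$, because $L(M_1,x) \le r$ implies $L(M_2,x) \ge 2\Delta - r > r$. Hence the standard coupling/TV inequality gives
\[
\mathbb{P}^{M_1,\ALG}(B_1) + \mathbb{P}^{M_2,\ALG}(B_2) \ \ge\ 1 - \TV\bigl(\mathbb{P}^{M_1,\ALG},\mathbb{P}^{M_2,\ALG}\bigr),
\]
and therefore $\sup_{M \in \{M_1,M_2\}} \mathbb{P}^{M,\ALG}(L(M,X) > r) \ \ge\ \tfrac{1}{2}(1-\TV)$.

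For part (a), the hypothesis $\TV(\mathbb{P}^{M_1,\ALG},\mathbb{P}^{M_2,\ALG}) < 1-2\delta$, read uniformly in $\ALG$, immediately yields $\inf_{\ALG} \sup_{M} \mathbb{P}^{M,\ALG}(L(M,X) > r) > \delta$ for every $r < \Delta$. The tail-to-quantile conversion of Theorem~\ref{thm:lower-minimax-quantile-relation} then gives $\mathfrak{M}_-(\delta) \ge r$ for every such $r$, and letting $r \uparrow \Delta$ delivers $\mathfrak{M}_-(\delta) \ge \Delta$.

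For part (b), I would reduce to (a) via the Bretagnolle--Huber inequality $\TV(P,Q) \le \sqrt{1-\exp(-\KL(P\|Q))}$ applied to $P = \mathbb{P}^{M_1,\ALG}$ and $Q = \mathbb{P}^{M_2,\ALG}$. A short algebraic check shows $\sqrt{1 - e^{-\KL}} < 1-2\delta \iff \KL < \log\tfrac{1}{4\delta(1-\delta)}$, so the stated KL hypothesis implies the TV hypothesis of (a), at which point the same argument closes the bound.

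The main obstacle is the strict-vs-non-strict boundary: the separation assumption is only non-strict, so $\mathbb{P}(L > \Delta)$ cannot be bounded directly at $r = \Delta$. Working with $r < \Delta$ and then passing $r \uparrow \Delta$ via the monotonicity built into the infimum defining $\mathfrak{M}_-$ navigates this smoothly. A secondary care point is to read the TV and KL hypotheses as holding uniformly over $\ALG \in \mathcal{D}$, which is precisely what allows the infimum over algorithms to be taken outside the tail bound before invoking the tail-to-quantile conversion.
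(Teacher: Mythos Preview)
Your proposal is correct and follows essentially the same approach as the paper's proof: the paper uses the parametrization $r=a\Delta$ with $a\in(0,1)$ and phrases the two-point step via the indicator loss $g_a(t)=\mathbf{1}\{t>a\Delta\}$ together with the interactive Le Cam inequality of \cite[Prop.~4]{chen2024assouad}, which is exactly your inclusion $B_1^c\subseteq B_2$ plus the elementary TV bound. The limiting step $a\uparrow 1$ (your $r\uparrow\Delta$), the tail-to-quantile conversion via Theorem~\ref{thm:lower-minimax-quantile-relation}, and the Bretagnolle--Huber reduction for part~(b) are identical.
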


\begin{proof}
We extend the indicator/thresholding device of \cite[Lemma 5]{ma2024high} to the interactive ISDM setting by coupling it with the interactive Le Cam inequality.
Write $\mathbb{P}_i := \mathbb{P}^{M_i,\ALG}$ for the transcript law under $M_i$. The proof proceeds in four steps.
\emph{i) (indicator reduction).} Fix $a\in(0,1)$ and set $g_a(t):=\mathbf{1}\{t>a\Delta\}$. The separation implies $g_a(L(M_1,x))+g_a(L(M_2,x))\ge 1$ for all transcripts $x$.
\emph{ii) (two-point inequality under interaction).} By the interactive two-point Le Cam inequality for expected risk \cite[Prop.~4]{chen2024assouad}, applied to the bounded losses $g_a\!\circ\! L$ and the pair $(M_1,M_2)$,
$
\inf_{\ALG}\ \sup_{j\in\{1,2\}}
\mathbb{P}^{M_j,\ALG}\!\bigl(L(M_j,X)>a\Delta\bigr)
=
\inf_{\ALG}\ \sup_{j}\ \mathbb{E}^{M_j,\ALG}\!\bigl[g_a(L(M_j,X))\bigr]
\ \ge\ \tfrac{1}{2}\,\bigl(1-\\TV(\mathbb{P}_1,\mathbb{P}_2)\bigr).
$
\emph{iii) (TV condition $\Rightarrow$ tail and quantile).} Under the premise of (a), $\tfrac{1}{2}\,(1-\TV(\mathbb{P}_1,\mathbb{P}_2))>\delta$, hence,
$
\inf_{\ALG}\sup_{M\in\{M_1,M_2\}}\mathbb{P}^{M,\ALG}\!\bigl(L(M,X)>a\Delta\bigr)>\delta.
$
Using Theorem~\ref{thm:lower-minimax-quantile-relation} (tail-to-quantile conversion), $\mathfrak{M}_{-}(\delta)\ge a\Delta$; letting $a\uparrow 1$ yields $\mathfrak{M}_{-}(\delta)\ge \Delta$.
\emph{iv) (KL condition $\Rightarrow$ TV condition).} Proceeding  as in \cite[Cor.~6]{ma2024high}, if
$\KL(\mathbb{P}_1\|\mathbb{P}_2) < \log\!\tfrac{1}{4\delta(1-\delta)}$, the Bretagnolle–Huber inequality \cite{bretagnolle1979estimation} implies \\$\TV(\mathbb{P}_1,\mathbb{P}_2) < 1-2\delta$; thus, (a) applies proving (b). 
\end{proof}

\section{Relation to Prior Works}
\label{sec:related}

The Decision–Estimation Coefficient (DEC) \cite{foster2021statistical, foster2023model, chen2024assouad} is a problem-dependent complexity measure that packages classical lower-bound ideas into a function trading off loss separation and information indistinguishability. Intuitively, a large DEC certifies the existence of many models that are far in loss yet hard to distinguish under the data-collection protocol, forcing nontrivial tail risk/regret. In the ISDM framework, \cite{chen2024assouad} formalize two variants: the \emph{quantile-DEC} \(r\text{-}\mathrm{dec}^{q}\), tailored to weak-tail (\(\ge\)) guarantees at level \(\delta\), and the \emph{constrained DEC} \(r\text{-}\mathrm{dec}^{\mathrm{c}}\), a reference-model version that transfers hardness across classes. They prove high-probability lower bounds via \(r\text{-}\mathrm{dec}^{q}\) and relate \(r\text{-}\mathrm{dec}^{q}\) and \(r\text{-}\mathrm{dec}^{\mathrm{c}}\) as summarized below. Throughout we use up-to-constants notation: \(A \gtrsim B\) abbreviates \(A \ge c\,B - C\) for universal constants \(c>0\) and \(C\ge 0\) independent of \(T,\delta\), the algorithm, and the model class. All quantifiers remain exact.

For comparison with our strict quantiles (defined via \(P(L>r)\)), we introduce the weak quantile \(\mathrm{WQuantile}(1-\delta,\mathbb{P}^{M,\mathrm{ALG}},L):=\inf\{r\!\ge\!0:\mathbb{P}^{M,\mathrm{ALG}}(L(M,X)\ge r)\le \delta\}\), and the corresponding minimax weak quantile \(\mathfrak{M}_W(\delta):=\inf_{\mathrm{ALG}\in\mathcal{D}}\;\sup_{M\in\mathcal{M}}\;\mathrm{WQuantile}(1-\delta,\mathbb{P}^{M,\mathrm{ALG}},L)\). Weak-tail bounds (\(\ge\)) do not in general imply strict-tail bounds (\(>\)) without continuity (e.g., if \(P(L=r)>0\)). We use the standard mapping: if, for every \(\mathrm{ALG}\), there exists \(M\in\mathcal{M}\) with \(\mathbb{P}^{M,\mathrm{ALG}}(L\ge b)\ge p\), then \(\mathfrak{M}_W(p)\ge b\).
The high-probability lower bound of \cite[Thm.~D.3]{chen2024assouad} states that, for all \(\mathrm{ALG}\), there exists \(M\in\mathcal{M}\) such that \(\mathbb{P}^{M,\mathrm{ALG}}\!\bigl(L(M,X)\ge T\cdot r\text{-}\mathrm{dec}^{q}(\mathcal{M})\bigr)\gtrsim \delta\), whence the mapping yields \(\mathfrak{M}_W(\delta)\gtrsim T\cdot r\text{-}\mathrm{dec}^{q}(\mathcal{M})\). Moreover, \cite[Prop.~D.2]{chen2024assouad} relates the quantile-DEC to the constrained DEC via \(r\text{-}\mathrm{dec}^{q}(\mathcal{M})\gtrsim \max_{\widehat{M}\in\mathcal{M}^{+}} r\text{-}\mathrm{dec}^{\mathrm{c}}(\mathcal{M},\widehat{M})\), where \(\mathcal{M}^{+}\) denotes the reference models permitted in their constrained formulation. Combining these relations yields \(\mathfrak{M}_W(\delta)\gtrsim T\cdot \max_{\widehat{M}\in\mathcal{M}^{+}} r\text{-}\mathrm{dec}^{\mathrm{c}}(\mathcal{M},\widehat{M})\). Thus, the minimax weak quantile is controlled via the quantile-DEC and, through the relation above, the constrained DEC. As presented in \cite{chen2024assouad}, these results are not carried through to \(\delta\)-explicit statements for \(\mathfrak{M}_W(\delta)\) and do not yield strict-tail (\(>\)) minimax quantile bounds. In contrast, our results keep \(\delta\) explicit and certify strict minimax quantile lower bounds (see Cor.~\ref{cor:info-criterion}, Thm. \ref{thm:hp-lecam}).

\section{The two-armed Gaussian Bandit}
The following demonstrates the simplicity of our framework to derive high probability/quantile lower bounds for the minimax risk. 
In this section, we define a simple 2-arm bandit problem as an instance of ISDM and apply Theorem 5 to derive a lower bound on the lower minimax quantile. 

A two-armed Gaussian bandit can be viewed as an instance of the ISDM framework.
Let the model class be
$
\mathcal{M}
=\big\{\,\big(\mathcal{N}(\mu_1,1),\,\mathcal{N}(\mu_2,1)\big)\ :\ (\mu_1,\mu_2)\in\mathbb{R}^2\,\big\}.
$ Let the history space be $\mathcal{H}^t := (\{1,2\}\times\mathbb{R})^t$ and let
$\mathrm{ALG}$ be a sequence of stochastic decision kernels
$\{\Pi_t\}_{t=1}^T$ with $\Pi_t(\cdot \mid h^{t-1}) \in \Delta(\{1,2\})$ for each
$h^{t-1}\in\mathcal{H}^{t-1}$. Then, the interaction evolves as
$
\pi_t \sim \Pi_t(\cdot \mid H^{t-1}),\qquad
o_t \mid (\pi_t,H^{t-1}) \sim M_\mu(\pi_t)=\mathcal{N}(\mu_{\pi_t},1),
$
with $(o_t)$ conditionally independent of $H^{t-1}$ given $\pi_t$.
The (random) trajectory is
$
H^T := (\pi_1,o_1,\ldots,\pi_T,o_T)\in(\{1,2\}\times\mathbb{R})^T.
$
The data is then given by \(X=H^T\),
with distribution \(\mathbb{P}^{M,\mathrm{ALG}}\). Furthermore, let the loss be the regret given by \(L(M,X)  = \sum^T_{t=1} \mu^* - \mu_{\pi_t}\) where $\mu^* = \max(\mu_1,\mu_2)$.
\begin{theorem}
    For the two-armed Gaussian bandit problem above with $T\geq1$, for $\delta \in (0,1/2)$, we have $\mathfrak{M}_{-}(\delta) \geq \sqrt{\frac{T\log(1/4\delta(1-\delta))}{2}}.$
\end{theorem}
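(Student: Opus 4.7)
The plan is to apply Theorem~\ref{thm:hp-lecam}(b) to a standard symmetric two-point construction for the two-armed Gaussian bandit. I take $M_1$ to have means $(\mu_1,\mu_2)=(\Delta',0)$ and $M_2$ to have means $(0,\Delta')$, where $\Delta'>0$ is a gap parameter to be tuned later. Under $M_i$ the optimal arm is arm $i$, so, writing $N_j(x)$ for the number of pulls of arm $j$ in the transcript $x$, the regret reduces to $L(M_1,x)=\Delta' N_2(x)$ and $L(M_2,x)=\Delta' N_1(x)$. Because $N_1(x)+N_2(x)=T$ holds for every trajectory, the uniform separation hypothesis of Theorem~\ref{thm:hp-lecam} is satisfied with equality,
\[
L(M_1,x)+L(M_2,x)\;=\;\Delta' T\;=:\;2\Delta,\qquad \Delta=\tfrac{\Delta' T}{2},
\]
independently of the algorithm.

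Next I compute the transcript KL $\KL(\mathbb{P}^{M_1,\ALG}\|\mathbb{P}^{M_2,\ALG})$ by the standard bandit divergence decomposition. Expanding the log-likelihood ratio along the history $(\pi_1,o_1,\dots,\pi_T,o_T)$, the policy-kernel terms cancel because the same kernel $\Pi_t(\cdot\mid h^{t-1})$ is used under both models, and each remaining per-round term is a Gaussian KL equal to $\KL(\mathcal{N}(\Delta',1)\|\mathcal{N}(0,1))=\Delta'^2/2$ regardless of which arm is pulled. Taking expectation under $\mathbb{P}^{M_1,\ALG}$ therefore yields
\[
\KL(\mathbb{P}^{M_1,\ALG}\|\mathbb{P}^{M_2,\ALG})\;=\;\tfrac{\Delta'^2}{2}\,\mathbb{E}^{M_1,\ALG}\!\bigl[N_1+N_2\bigr]\;=\;\tfrac{T\Delta'^2}{2},
\]
a bound that is uniform in $\ALG$. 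Choosing $\Delta'$ so that this equals the KL budget $\log\!\tfrac{1}{4\delta(1-\delta)}$ of Theorem~\ref{thm:hp-lecam}(b), i.e.\ $\Delta'=\sqrt{2\log(1/(4\delta(1-\delta)))/T}$, the theorem returns
\[
\mathfrak{M}_{-}(\delta)\;\ge\;\Delta\;=\;\tfrac{\Delta' T}{2}\;=\;\sqrt{T\log(1/(4\delta(1-\delta)))/2},
\]
which is the claimed bound.

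The only step requiring real care is the divergence-decomposition argument, which must be justified for a \emph{general} interactive algorithm: one has to verify that the log-ratio telescopes to a sum of per-round observation KLs with no residual policy contribution, using that $\pi_t$ is $\sigma(H^{t-1})$-measurable so that the selection-kernel factors are common to $\mathbb{P}^{M_1,\ALG}$ and $\mathbb{P}^{M_2,\ALG}$. This is classical for bandits and is the single place where interactivity enters explicitly. Everything else is routine verification: the separation is deterministic because pulls sum to $T$, the symmetry of the Gaussian KL in this two-point pair makes the identity $\KL=T\Delta'^2/2$ insensitive to the direction, and the tuning of $\Delta'$ is purely algebraic.
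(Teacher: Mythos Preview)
Your approach is essentially the paper's: a symmetric two-point construction, the bandit divergence decomposition to evaluate the transcript KL, and an appeal to Theorem~\ref{thm:hp-lecam}(b). The paper uses means $(\pm g/2,\mp g/2)$ rather than your $(\Delta',0)/(0,\Delta')$, but this is only a translation and is immaterial.

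There is one genuine slip. Theorem~\ref{thm:hp-lecam}(b) requires the \emph{strict} inequality $\KL(\mathbb{P}_1\|\mathbb{P}_2)<\log\tfrac{1}{4\delta(1-\delta)}$, whereas you choose $\Delta'$ so that the KL \emph{equals} this threshold. As written, the hypothesis of the theorem is not met and the conclusion does not follow. The paper handles this by taking $g=(1-\eta)\sqrt{2\Lambda_\delta/T}$ with $\Lambda_\delta=\log\tfrac{1}{4\delta(1-\delta)}$, so that the KL is strictly below the budget, deducing $\mathfrak{M}_{-}(\delta)\ge (1-\eta)\sqrt{(T/2)\Lambda_\delta}$, and then letting $\eta\downarrow 0$. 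You should insert the same slack-and-limit step.

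A minor wording point: your claim that ``each remaining per-round term is a Gaussian KL equal to $\KL(\mathcal{N}(\Delta',1)\|\mathcal{N}(0,1))$ regardless of which arm is pulled'' is slightly imprecise, since for arm~$2$ the relevant term is $\KL(\mathcal{N}(0,1)\|\mathcal{N}(\Delta',1))$; but as the Gaussian KL depends only on the squared mean gap, the value $\Delta'^2/2$ is correct in either case.
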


\begin{proof}
Fix any (possibly randomized, adaptive) algorithm. For a model \(M=(\mu_1,\mu_2)\) and trajectory \(x\) with actions \(A_{1:T}\), let \(\mathfrak{R}_T(M,x)\!=\!\sum_{t=1}^T(\mu^\star(M)-\mu_{A_t})\), where \(\mu^\star(M)=\max\{\mu_1,\mu_2\}\). Pick \(g>0\) and consider \(M_1:(\mu_1,\mu_2)\!=\!(+g/2,-g/2)\) and \(M_2\!:\!(\mu_1,\mu_2)\!=\!(-g/2,+g/2)\). If \(N_a(x)\) is the number of pulls of arm \(a\) in \(x\), then \(\mathfrak{R}_T(M_1,x)=g\,N_2(x)\) and \(\mathfrak{R}_T(M_2,x)=g\,N_1(x)\); thus, \(\mathfrak{R}_T(M_1,x)+\mathfrak{R}_T(M_2,x)=g(N_1(x)+N_2(x))=gT\), i.e., the separation condition holds with \(\Delta=gT/2\). Let \(\mathbb{P}_i\) be the trajectory law under \(M_i\) and set \(\Lambda_\delta=\log(1/(4\delta(1-\delta)))\). Using the divergence decomposition lemma \cite[Lemma~15.1]{lattimore2020bandit}, \(\mathrm{KL}(\mathbb{P}_1\|\mathbb{P}_2)=\sum_{a=1}^2 \mathbb{E}_{M_1}[N_a(T)]\,\mathrm{KL}(\mathcal N(\mu_a^{(1)},1)\Vert \mathcal N(\mu_a^{(2)},1))=(g^2/2)\,T.\)
Using Theorem~\ref{thm:hp-lecam}, if \(\mathrm{KL}(\mathbb{P}_1\|\mathbb{P}_2)<\Lambda_\delta\) and the separation holds, then \(\mathfrak{M}_{-}(\delta)\ge \Delta\). Hence, whenever \(g^2T/2<\Lambda_\delta\), we have \(\mathfrak{M}_{-}(\delta)\ge gT/2\). Choose any \(\eta\in(0,1)\) and take \(g=(1-\eta)\sqrt{2\Lambda_\delta/T}\), which yields \(\mathfrak{M}_{-}(\delta)\ge (1-\eta)\sqrt{(T/2)\Lambda_\delta}=(1-\eta)\sqrt{(T/2)\log(1/(4\delta(1-\delta)))}\); letting \(\eta\downarrow 0\) proves the claim.
\end{proof}

\begin{remark}
Chapter~17 of \cite{lattimore2020bandit} proves weak-tail bounds \(\mathbb{P}(L\ge r)\ge \delta\) with the same two-arm Gaussian scaling \(\sqrt{T\log(1/\delta)}\) (up to constants). In contrast, we obtain a \(\delta\)-explicit \emph{strong-tail} minimax-quantile lower bound via Theorem~\ref{thm:hp-lecam}; instantiating it for the two-arm Gaussian bandit yields our bound immediately. Moreover, the same theorem holds in ISDM and provides a template for analogous high-probability lower bounds beyond bandits.
\end{remark}

 \section{Conclusion and Future Work}
We introduce a $\delta$-explicit minimax-quantile lens for ISDM problems and supply interactive Fano and Le Cam tools that are straightforward to instantiate, recovering optimal scaling on two-armed bandits. It would be interesting to extend to the ISDM framework, \cite[Thm.~8]{ma2024high}, which transfers local minimax-risk lower bounds to the minimax quantile. Another future direction is to develop both \emph{$\delta$-explicit} weak minimax quantile lower bounds and strict minimax-quantile lower bounds in interactive settings using the quantile-DEC framework, for various problems including episodic RL.

\label{sec:refs}



\bibliographystyle{IEEEbib}
\bibliography{strings,refs}

\end{document}